\newtheorem{theorem}{Theorem}[section]
\newtheorem{proposition}[theorem]{Proposition}
\begin{document}
\title{An Arrow-type result for inferring a species tree  from gene trees}
\author{Mike Steel}
\address{Allan Wilson Centre for Molecular Evolution and Ecology, University of Canterbury, Christchurch, New Zealand: Email: mike.steel@canterbury.ac.nz}
\begin{abstract}
The reconstruction of a central tendency `species tree' from a large number of conflicting gene trees is a central problem in systematic biology. Moreover, it becomes particularly problematic when taxon
coverage is patchy, so that not all taxa are present in every gene tree. Here, we list four  desirable properties that a method for estimating a species tree from gene trees should have.  We show that while
these can be achieved when taxon coverage is complete (by the Adams consensus method), they cannot all be satisfied in the more general setting of partial taxon coverage. 

\end{abstract}

\date{\today}

\maketitle

\section{Axioms for reconstructing a species tree from gene trees}
Reconstructing a rooted phylogenetic species tree from a sequence of gene trees (one for each genetic locus) can be viewed as a type of voting procedure. Each locus supports a gene tree,  and  tree reconstruction seeks to return a species tree that has overall highest support from the population of voters (trees).  As in social choice theory, where Arrow's theorem \cite{arr} has long played a prominent role,  it is  relevant to ask which properties can be satisfied; in  phylogenetics, a number of authors have considered these questions and shown that various combinations of axioms are impossible \cite{bar, bar2, day, mcm, mcm2, ste, tha}.  In this short note, we describe a further result based on a slightly different set of assumptions that are appropriate to settings where taxon coverage across loci is patchy \cite{san}, and we discuss  its implications briefly.

Formally, a {\em species tree estimator} is a  function  $\psi$ that assigns a rooted phylogenetic $X$-tree to any sequence $t_1,\ldots, t_k$ of trees at different loci, where $X$ is the set of taxa that occur in at least one tree.  Throughout this paper, all trees are rooted phylogenetic trees, and so can be thought of as a hierarchy (i.e. a collection of subsets of the nonempty leaf set $Y$, containing $Y$ and the singletons $\{y\}: y \in Y$, and satisfying the nesting property that any two sets are either disjoint or one is a subset of the other). 

If each tree $t_i$ has the same leaf set $X$ then $\psi$ constructs a {\em consensus tree},
while if the leaf set of the trees $t_1,\ldots, t_k$ are not all equal to $X$ (due to patchy taxon coverage across loci) then $\psi$ constructs a {\em supertree}.
A  tree reconstruction procedure is regarded as fully deterministic (e.g. in the case of ties, as with equally most parsimonious trees, one might take the strict consensus of the resulting trees). 

Here are four axioms.

\begin{itemize}
\item[{\bf (A1)}] (`Unrestricted domain') For any sequence (of any length) of rooted phylogenetic trees, $\psi$ returns a single phylogenetic tree (resolved or unresolved)  on some subset of the taxa mentioned by the input trees.
\item[{\bf (A2)}]   (`Unanimity') For any tree $t$, we have $\psi(t, t, \ldots, t) = t$.
\item [{\bf (A3)}] (A weak `Independence' condition).   
Suppose that $t_1, \ldots, t_k$ are rooted trees on leaf sets $X_1, \ldots, X_k$, respectively, and that $Y$ is a subset of $X=\cup_{i=1}^k X_i$.  Let
$t_i|Y$ denote the rooted phylogenetic tree that $t_i$ induces for $X_i \cap Y$.  
Then  the tree obtained by applying $\psi$ to $t_1|Y, \ldots, t_k|Y$ coincides with, or is refined by, the tree obtained from $\psi(t_1, \ldots, t_k)$ by considering just the taxa in $Y$. 
\item[{\bf (A4)}] (An `Irrelevance' axiom) 
Suppose that $t_1, \ldots, t_k$ are rooted trees on leaf sets that comprise subsets of $X$, and that $t_i$ consists of just one taxon that is present in the other trees. Then  $\psi(t_1, \ldots, t_k)$ is unchanged
if $\psi$ is applied to the sequence $t_1,\ldots, t_k$ with $t_i$ removed.
\end{itemize}

In words, (A3) states that if we add taxa and build a species tree, and we then consider how this tree describes the original set of taxa, it is either the same tree, or perhaps (due to the additional data) a more resolved one, but it is not
less resolved or inconsistent with the original tree.

Condition (A4) says that a rooted tree that just has one taxon in the set under consideration should not alter the tree returned by the method for that taxon set.  The idea here is that such a trivial tree carries no phylogenetic information, so it should not affect the outcome of the method.

Notice that (A3) can also be strengthened to the simpler statement, which we denote as (A3*) (this is listed as condition $I_6$ in \cite{bar2}):

\begin{itemize}
\item [{\bf (A3*)}] (An `Independence' condition).   
Suppose that $t_1, \ldots, t_k$ are rooted trees on leaf sets  that comprise subsets of $X$, and that $Y$ is a subset of $X$.  
Then  the tree obtained by applying $\psi$ to $t_1|Y, \ldots, t_k|Y$ is the tree obtained from $\psi(t_1, \ldots, t_k)$ by considering just the taxa in $Y$. 
\end{itemize}

In the consensus setting (where all the input trees have the same leaf set and so  (A4) holds vacuously)  an example of a method that satisfies (A1) and (A2) is to construct the strict consensus of all the input trees; however, this method fails (A3).  
Nevertheless, in this consensus setting, there is a method that satisfies all four properties (A1)--(A4) and even (A3*), namely the Adams consensus method \cite{ada}, as we now show.

\begin{proposition}
\label{prop1}
In the consensus setting, the Adams consensus method satisfies properties (A1)--(A4). Moreover,  (A3*) also holds.
\end{proposition}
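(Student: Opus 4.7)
The plan is to recall the standard recursive definition of the Adams consensus $A$ and check the four axioms in turn, saving (A3*) for last since it implies (A3) as an immediate corollary. Given rooted trees $t_1, \ldots, t_k$ on a common leaf set $X$, each $t_i$ induces a root partition $\pi_i$ of $X$ whose blocks are the leaf sets of the maximal subtrees below the root of $t_i$; the Adams tree $A(t_1, \ldots, t_k)$ has one root-child for each block $B$ of $\pi := \bigwedge_i \pi_i$, the coarsest common refinement (the partition in which $x \sim y$ iff $x$ and $y$ lie in a common block of every $\pi_i$), with the subtree on any non-singleton block $B$ equal recursively to $A(t_1|B, \ldots, t_k|B)$.

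Axioms (A1), (A2) and (A4) should be essentially bookkeeping. (A1) is immediate from the construction. (A2) follows by induction on $|X|$: when every $t_i$ equals the same tree $t$, each $\pi_i$ coincides with the root partition of $t$, so $\pi$ is that same partition, and by inductive hypothesis each blockwise recursive call reproduces the corresponding subtree of $t$. (A4) is vacuous in the consensus setting, since every input shares the leaf set $X$, so a single-taxon input can only occur when $|X|=1$, a degenerate case in which removing $t_i$ leaves at least one other copy of the same trivial tree.

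The substantive claim is (A3*), namely $A(t_1|Y, \ldots, t_k|Y) = A(t_1, \ldots, t_k)|Y$ for every $Y \subseteq X$, from which (A3) follows at once (equality satisfies the clause ``coincides with, or is refined by''). The plan is to induct on $|X|$, reducing to a commutation lemma at the root level: writing $\pi_i'$ for the root partition of $t_i|Y$, I need $\bigwedge_i \pi_i'$ to agree with the partition of $Y$ obtained from $\bigwedge_i \pi_i$ by intersecting each block with $Y$ and discarding empties. Granted this, the two sides of the target identity share the same root partition, and the inductive hypothesis applied block-by-block identifies the subtrees hanging off each corresponding root-child.

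The main obstacle is verifying the commutation lemma, which is delicate whenever restriction to $Y$ suppresses the root of some $t_i$: if only one block of $\pi_i$ meets $Y$, the naive restriction $\pi_i|_Y$ becomes trivial, yet the actual root partition $\pi_i'$ of $t_i|Y$ is non-trivial, inherited from a deeper internal node of $t_i$. The argument must therefore use more than just the top-level partitions. A clean way to organise this is to track the full nested partition chain associated with each input tree and show, by induction on chain depth, that restriction distributes across the coarsest-common-refinement operation levelwise; the required identity then reduces to the elementary fact that the trace on $Y$ of a coarsest common refinement equals the coarsest common refinement of the traces.
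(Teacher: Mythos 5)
Your plan follows the same route as the paper's own proof: dispose of (A1), (A2) and (A4) as bookkeeping (the paper does exactly this, noting that (A4) is vacuous in the consensus setting), and then prove (A3*) by induction along the recursive product-of-root-partitions construction of the Adams tree, reducing to a commutation statement at the root. You have also put your finger on precisely the step that the paper's proof passes over: the paper identifies the maximal clusters of $T_i|Y$ with the nonempty traces $C_i\cap Y$ of the maximal clusters $C_i$ of $T_i$, and this identification fails in exactly the case you describe, namely when $Y$ lies inside a single maximal cluster of some $T_i$, so that the root partition of $T_i|Y$ is inherited from a deeper node.

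However, the repair you propose --- distributing restriction across the coarsest common refinement ``levelwise'' along the nested partition chains --- cannot succeed, because the commutation lemma you need is false, and with it the (A3*) claim itself. Take $t_1=(((a,b),c),d)$, $t_2=((a,c),b,d)$ and $Y=\{a,b,c\}$. The root partitions of $t_1,t_2$ are $abc|d$ and $ac|b|d$, whose coarsest common refinement is $ac|b|d$; the block $\{a,c\}$ recurses to a cherry, so $Ad(t_1,t_2)=((a,c),b,d)$ and hence $Ad(t_1,t_2)|Y=((a,c),b)$. On the other hand, $t_1|Y=((a,b),c)$ and $t_2|Y=((a,c),b)$ have root partitions $ab|c$ and $ac|b$, whose coarsest common refinement is the discrete partition, so $Ad(t_1|Y,t_2|Y)$ is the star tree on $\{a,b,c\}$. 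The two sides differ: here one of the restricted root partitions (that of $t_1|Y$) comes from a node strictly below the root of $t_1$, and the extra information it carries is destroyed in the product. (Note that $((a,c),b)$ does refine the star tree, so the weaker condition (A3) survives in this example; it is only the equality asserted in (A3*) that breaks.) So the obstacle you flagged is not a presentational delicacy to be organised away but a genuine counterexample; no induction on chain depth can recover the identity $Ad(t_1|Y,\ldots,t_k|Y)=Ad(t_1,\ldots,t_k)|Y$, and a correct proof can at best aim for the one-directional refinement statement (A3), which requires a different and more careful argument than either your sketch or the paper's.
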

\begin{proof}
First, as noted above, in the consensus setting, (A4) is vacuously satisfied. 
Conditions (A1) and (A2) are also clearly satisfied, so it suffices to establish (A3*); that is, for $\psi= Ad$, we have: $Ad(T_1|Y, \ldots, T_k|Y) = Ad(T_1, \ldots, T_k)|Y$.  
To see this, note that the maximal clusters of the tree $Ad(T_1|Y, \ldots, T_k|Y)$ are the nonempty intersections of maximal clusters of $T_1|Y,\ldots, T_k|Y$;  in other words,  
the nonempty intersections of the form  $(C_1 \cap Y) \cap (C_2 \cap Y) \cap  \cdots \cap  (C_k \cap Y)$, where $C_i$ is a maximal cluster of $T_i$.   However,
$$(C_1 \cap Y) \cap (C_2 \cap Y) \cap \cdots \cap (C_k \cap Y) = (C_1 \cap C_2 \cap \cdots \cap C_k) \cap Y,$$
and the right-hand side describes the maximal clusters of $Ad(T_1, \ldots, T_k)|Y$.  
Thus, $Ad(T_1|Y, \ldots, T_k|Y)$ and $Ad(T_1, \ldots, T_k)|Y$ have identical maximal clusters, and since the Adams consensus method proceeds recursively on the trees induced by these maximal clusters
it follows, by induction, that the consensus trees  $Ad(T_1|Y, \ldots, T_k|Y)$ and $Ad(T_1, \ldots, T_k)|Y$  have identical clusters.

\end{proof}

\section{An impossibility result in the supertree setting}

In the supertree setting, it also easy to find methods that simultaneously satisfy (A1), (A3) and (A4); a trivial example is the method that constructs
the star tree for all inputs.  

Satisfying (A1), (A2) and (A4) together is also fairly straightforward -- output the star tree unless, for some tree $t$, the input trees $(t_1,\ldots, t_k)$ have the property 
that $t_i = t$ for all $i$ in some nonempty subset $I$ of $\{1, \ldots, k\}$, and $t_j$ is a tree with just one leaf for all $j \in \{1, \ldots, k\} - I$; in which case we output the tree $t$.

There is also a supertree method that satisfies (A1), (A2) and (A3*) (and hence (A3)).  This is the following extension of the Adams consensus method.
Given $t_1, \ldots, t_k$  on the leaf sets $X_1, \ldots, X_k$ respectively, let $X' = \cap_{i=1}^k X_i$ and $X= \cup_{i=1}^k X_i$ denote, respectively, the set of leaves that are present in {\em every}  tree and the set of leaves that are present in {\em at least one} tree.

Construct $Ad(t_1|X' \ldots, t_k|X')$, the Adams consensus tree of $t_1|X' \ldots, t_k|X'$.  Now attach  each element $X-X'$ to the root of this tree by a separate pendant edge.
It is clear that this supertree method $\psi_{Ad}$ satisfies (A1) and (A2).  To see that $\psi_{Ad}$ also satisfies (A3*), suppose that $Y \subseteq X$. 
Then the intersection of the leaf sets of $t_1|Y, \ldots, t_k|Y$ is $W:=X' \cap Y$, and
since 
$$Ad(t_1|W \ldots, t_k|W) = Ad ((t_1|X')|Y, \ldots, (t_k|X')|Y) 
=Ad (t_1|X', \ldots, t_k|X')|Y,$$
where the second equality is from Proposition~\ref{prop1} (i.e. the Adams consensus method satisfies (A3*)), we have:
\begin{equation}
\label{eqcel}
Ad(t_1|W \ldots, t_k|W)  = Ad (t_1|X', \ldots, t_k|X')|Y.
\end{equation}

Now, $\psi_{Ad}(t_1, \ldots, t_k)$ is the tree obtained from $Ad(t_1|X', \ldots, t_k|X')$ by attaching each element of  $X-X'$  to the root of this tree by a separate pendant edge.
Thus $\psi_{Ad}(t_1, \ldots, t_k)|Y$ is the tree obtained from $Ad(t_1|X', \ldots, t_k|X')|Y$ by attaching each element of $Y-W$ to the root of this tree by a separate pendant edge.

On the other hand,   $\psi_{Ad}(t_1|Y \ldots, t_k|Y)$ is the tree obtained from $Ad(t_1|W \ldots, t_k|W)$ by attaching to the root of this tree each element of  $Y-W$ by a separate pendant edge.  Since $Ad(t_1|W \ldots, t_k|W) = Ad (t_1|X', \ldots, t_k|X')|Y$ (by (\ref{eqcel})),  we have  $\psi_{Ad}(t_1|Y \ldots, t_k|Y) = \psi_{Ad}(t_1, \ldots, t_k)|Y,$ as claimed.

\bigskip

This shows that the supertree method $\psi_{Ad}$ satisfies (A1), (A2) and (A3*).  However, what if we wish to include the apparently innocuous condition (A4) as well? In this case, even if we weaken (A3*) back to (A3), our main result shows that no such method can simultaneously accommodate these conditions.
Formally, we have:

\begin{proposition}
\label{main}
No tree reconstruction procedure exists that simultaneously satisfies axioms (A1), (A2), (A3) and (A4) on all inputs. 
\end{proposition}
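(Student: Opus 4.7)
My plan is to exhibit an explicit short input sequence on a small taxon set --- I expect $X=\{a,b,c,d\}$ with a few carefully chosen input trees to suffice --- for which the axioms (A1)--(A4) collectively cannot be satisfied. The forcing mechanism I would use is this: choose a collection of subsets $Y_1,Y_2,\ldots \subset X$ (each of size $3$) together with specific resolved target trees $u_j$ on $Y_j$, and engineer the input $(t_1,\ldots,t_k)$ so that each restricted list $(t_1|Y_j,\ldots,t_k|Y_j)$ reduces, by iterated applications of (A4) to single-leaf restrictions whose taxon occurs in another restricted tree, to a sequence of identical copies of $u_j$. Axiom (A2) then pins down $\psi(t_1|Y_j,\ldots,t_k|Y_j)=u_j$.

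Next, (A3) forces $\psi(t_1,\ldots,t_k)|Y_j$ to be at least as resolved as $u_j$, so it must carry the non-trivial cluster $C_j\subsetneq Y_j$ of $u_j$. Lifting this constraint to the full tree, there must be a cluster $C_j^\ast$ of $\psi(t_1,\ldots,t_k)$ with $C_j^\ast \cap Y_j = C_j$, giving $C_j\subseteq C_j^\ast \subseteq C_j\cup(X\setminus Y_j)$. Thus only a short list of candidate clusters $C_j^\ast$ remains for each $j$.

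I then aim to choose the pairs $(Y_j,u_j)$ so that, no matter which $C_j^\ast$ one selects from the allowable list for each $j$, some pair of selected clusters is neither disjoint nor nested. Since the cluster set of any rooted phylogenetic tree must form a hierarchy, this is impossible, and so $\psi(t_1,\ldots,t_k)$ cannot exist, contradicting (A1) and completing the proof.

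The main obstacle --- and what I expect to spend most effort on --- is actually building such an input list. An input tree $t_i$ with two or more leaves inside some $Y_j$ contributes a non-single-leaf restriction that (A4) cannot remove, so $t_i|Y_j$ must match $u_j$ exactly; and the simultaneous conditions on $t_i$ from different $Y_j$'s quickly demand that $t_i$ contain incompatible clusters (for instance $\{a,b\}$ and $\{b,c\}$), so no single $t_i$ can contribute to more than one restriction. The construction therefore has to partition the ``informative'' inputs among the $Y_j$'s (each $t_i$ aligning its leaf set with exactly one target $Y_j$) while a stock of auxiliary single-leaf and cherry trees is added to supply the (A4)-removable singletons needed in the other restrictions. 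Balancing these choices against each other, so that unanimity after (A4) reduction does emerge on each $Y_j$ while the resulting cluster constraints remain pairwise irreconcilable, is the delicate combinatorial heart of the argument.
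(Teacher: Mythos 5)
Your overall mechanism --- use (A4) to strip each restricted input list down to identical copies of a target resolved triple, apply (A2) to pin down $\psi$ on that restriction, and then use (A3) to force the full output tree to display the triple --- is exactly the first half of the paper's argument, and that part is sound. But the proposal has a genuine gap at what you yourself call its ``combinatorial heart'', and your proposed starting point provably cannot work. For the (A4)/(A2) reduction to succeed on the restriction to $Y_j$, every input tree not assigned to $Y_j$ must restrict to at most one leaf of $Y_j$: as you observe, a two-leaf restriction can neither be removed by (A4) nor equal $u_j$, so unanimity fails. Hence the sets $Y_1,Y_2,\ldots$ must pairwise intersect in at most one taxon. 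On $X=\{a,b,c,d\}$ any two distinct $3$-element subsets share two taxa, so no family of two or more such triples exists and no contradiction can be manufactured there. More seriously, the existence question your plan hinges on --- a family of \emph{resolved} rooted triples, with leaf sets pairwise meeting in at most one point, that is jointly incompatible --- is left entirely open, and it is not obvious: the familiar small incompatible examples (such as $ab|c$ together with $ac|b$) share all three of their taxa.

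Such a family does exist, so your route can be completed, but it needs at least seven taxa and a genuine argument. For instance, take the seven lines of the Fano plane on $\{1,\ldots,7\}$ as leaf sets and choose the cherry of each triple so that the graph whose edges are these cherries is connected on all seven points; since every displayed cherry must lie inside a single maximal cluster of the root, and the maximal clusters partition the leaf set into at least two blocks, a connected cherry graph certifies incompatibility (this is the Aho--Sagiv--Szymanski--Ullman criterion). With that in hand your argument closes, and is arguably more economical than the paper's. By contrast, the paper never needs to exhibit an incompatible family: it takes a Steiner triple system on $13$ points, runs over all $3^{26}$ ways of resolving the $26$ blocks, and derives the contradiction by a pigeonhole count against the number $23!!$ of rooted binary trees on $13$ leaves. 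Until you supply both a correct base configuration and the incompatibility certificate, the proof is incomplete.
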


\begin{proof}
We employ a proof by contradiction; that is,  by supposing there were a method method satisfying (A1)--(A4), we derive a contradiction. 

Our argument relies on the existence of a classic combinatorial object called a Steiner triple system (STS). This is a collection of 3-element subsets (called `blocks') from $\{1, 2,\ldots, n\}$ for which 
any two subsets intersection in exactly one point.  When an STS exists, it has exactly $b = \frac{n(n-1)}{6}$ blocks.  It is a  basic result in design theory (a branch of combinatorics \cite{van})
that an STS exists precisely when  the division of $n$ by 6  leaves a remainder of 1 or 3.  In particular, there exists an STS with $n=13$ ($=6 \times 2 +1$) and so with $b = 26$ blocks. 

Let us now suppose we have a method $\psi$ satisfying (A1)--(A4).   We take the taxon set as $X= \{1,2, \ldots, 13\}$ and we label the 26 blocks of the 
STS as $b_1, b_2, \ldots, b_{26}$.  For each block $b_i$,  let $t_{ij}$ (where $j=1,2,3$) denote the three
possible rooted binary trees we can construct that have the leaf set $b_i$.

Now, let $f: X \rightarrow \{1,2,3\}$ be a selection of one value of $j$ for each $i$, and consider the sequence $S_f$ of trees $t_{if(i)}$ of trees.  Each of these sequences of 26 trees
will comprise an input for $\psi$.

By (A1), $\psi(S_f)$ is a rooted phylogenetic tree, which  we will denote as $T_f$, on the  leaf set $X$ (or some subset of these leaves).

By (A3),  taking  the set $Y=b_k$ as our subset of taxa we  obtain:
\begin{equation}
\label{psik}
T_f|b_k \mbox{ equals or refines }\psi(S_f|b_k).
\end{equation}

Now, by (A2) (i.e. $\psi(t) = t$ for $t=t_{kf(k)}$) and by repeated applications of (A4) (it is here that we use the STS property that $|b_j \cap b_k|=1$ for all $j \neq k$), we have:

\begin{equation}
\label{psik2}
\psi(S_f|b_k) = \psi(S_{kf(k)}) = t_{kf(k)}.
\end{equation}

Combining (\ref{psik}) and (\ref{psik2}) (and noting that a rooted binary tree on three leaves cannot be further refined), we obtain:
\begin{equation}
T_f|b_k =  t_{kf(k)}.
\label{niceeq}
\end{equation}

Let $T'_f = T_f$ if the latter tree is binary; otherwise, let $T'_f$ denote any binary tree obtained from $T_f$ by resolving it arbitrarily.
Then:
\begin{equation} T'_f|b_k =  t_{kf(k)}.
\label{niceeq2}
\end{equation}

Notice that this implies that the leaf set of $T'_f$ must be all of $X$ .
Moreover, Eqn. (\ref{niceeq2}) holds for all $3^{26}$ possible choices for $f$.   This gives us $3^{26}$ rooted binary trees, each on the leaf set $X$ of size 13 (one tree for each choice of $f$).

At this point, we invoke a crucial arithmetic fact: $3^{26}$ is larger than the total number of rooted binary trees on 13 leaves, which is $(23)!! = 1 \times 3 \times \cdots \times 23$.
Thus, by the `pigeonhole principle' \cite{van}, at least two of the binary trees $T'_f$ and $T'_{f'}$ must be equal for some pair $f \neq f'$.  But, by (\ref{niceeq2}),  this implies that $t_{kf(k)}= t_{kf'(k)}$ for
all $k$, and so $f=f'$. This contradiction establishes that the initial assumption of the existence of a method satisfying (A1)--(A4) is not possible.

\end{proof}

\section{Discussion}

Suppose we have a fixed set $S$ of species. 
It is clear that (even in the consensus setting) any method for building a species tree from gene trees should allow the tree to change as more loci are sequenced and the gene trees for these loci are  included in the analysis (since the gene trees at later loci may, for example
favour a different species tree). 

But suppose we fix the set of available loci, and instead try to build a tree by adding taxa.  Thus we may construct a tree for some of the taxa and then sequentially try to attach each additional taxon in an optimal place in this tree. On occasions, an additional taxon may even allow us to resolve
the tree a bit better, but we do not wish to go back and rearrange the tree we obtained at an earlier stage of the process.  Proposition~\ref{main} assures us that there is  no method that can guarantee to achieve this goal in general while also satisfying the clearly desirable properties (A1) and (A2).

\section{Acknowledgments}

I thank F.R. McMorris for several helpful comments, and  advice on the Adams consensus literature, and Joel Velasco for discussions on the possible relevance of this result.

{}
\end{document}